\documentclass[10pt, amssymb, aps, pra]{revtex4-2}
\usepackage{braket}
\usepackage{graphicx}
\usepackage{amsthm}
\usepackage{amsmath}
\usepackage{amssymb}
\usepackage{color}
\usepackage{soul}
\usepackage{url}
\usepackage{mathptmx}
\usepackage{times}
\usepackage{txfonts}
\usepackage[margin=1in]{geometry}
\usepackage{tikz}
\usepackage{tikz-cd}
\usetikzlibrary{arrows.meta}
\usepackage{hyperref}

\hypersetup{
    colorlinks=true,
    allcolors=black
}

\tikzset{every picture/.style={line width=0.75pt}} 

\DeclareMathOperator{\spn}{span}

\begin{document}

\title{Two-site Bose-Hubbard hopping and Schr\"{o}dinger cat states}
\author{Madeline Berezowski$^{1}$, Artur Sowa$^{1}$, Jonas Fransson$^{2}$}
\affiliation{$^{1}$Department of Mathematics and Statistics, University of Saskatchewan, Canada}
\affiliation{$^{2}${Department of Physics and Astronomy, University of Uppsala, Sweden}}

\maketitle
\newtheorem{definition}{Definition}
\newtheorem{theorem}{Theorem}[section]
\newtheorem{proposition}{Proposition}
\newtheorem{lemma}{Lemma}
\newtheorem{corollary}{Corollary}
\newtheorem{algorithm}{Algorithm}
\newtheorem{conjecture}{Conjecture}

\begin{center}

 Abstract

 \end{center}

The Bose-Hubbard Hamiltonian can be simplified to have only two lattice sites, in which case the system being described is referred to as a dimer. Due to its structure, the hopping term of the dimer Hamiltonian enjoys invariance in a family of subspaces indexed by a whole number $k$, each subspace corresponding to a system of only $k$ particles. We have invented an inductive argument using the bosonic canonical commutation relations to find the eigenvalues and eigenvectors of the dimer hopping Hamiltonian in its $k$-particle subspaces. In particular, this Hamiltonian, when restricted to one of the $k$-particle subspaces, is exactly the spin projection operator along the $x$-axis, where the number of particles $k$ in the dimer system yields the projection matrix for spin quantum number $s=k/2$. Thus, a new method for computing the eigenvalues and eigenvectors of the $x$-axis spin projector has been unearthed. We use the explicit construction to study the dynamics of coherent states induced by the square of the dimer hopping hamiltonian. We find that it generates Schr\"{o}dinger cat states in the two-site setting. 

\vspace{.2cm}

  \noindent KEYWORDS: The Bose-Hubbard model, spin matrices,  bosonic quantum computing
    \vspace{.2cm}

\section{Introduction}

The Bose-Hubbard model describes the physics of spinless interacting bosons in a lattice. There is a multitude of papers that investigate this system. The case where there are only two lattice sites, known as a double-well system or a dimer, is also well-studied, e.g. \cite{dimer01}, \cite{Zhang_Rieger}, \cite{dimer02}. Such systems are also studied numerically, e.g., \cite{exact_diag_bh}. 

The number theoretic methods for the study of the Bose-Hubbard model, introduced in \cite{number_theoretic_bose_hubbard}, \cite{nonlocal_coherent_states}, are necessary to study systems with infinitely many degrees of freedom. While it would not be necessary in the discussion of the dimer case, we adhere to this framework to be consistent with the general approach. The main result is an explicit characterization of the eigenvectors of the hopping term, given in Theorem \ref{TheorMain}. In particular, it reproduces the well known eigenvalues of the spin matrices. The construction enables us to express the coherent states in the form of superpositions of the hopping term's eigenvectors. It leads to the conclusion that the dynamics generated by the square of the hopping term periodically evolves coherent states into cat states. A study of such phenomena was initiated in the classical articles, \cite{Milburn} and \cite{Yurke-Stoler}. Our approach, developed in Section \ref{Section_Cat}, follows the framework introduced in the latter reference. The Schr\"{o}dinger cat states are foundational for bosonic quantum computing. In fact, the two site case is of special importance, see e.g. \cite{Chen}.   

We consider the Hamiltonian operator, which describes the energy of the system, and whose domain is the separable Hilbert space called the bosonic Fock space $\mathbb{H}^\odot$. This space allows us to describe a system with any number of bosons. If the system has only one particle, and that particle is located in the $j$th lattice site, the corresponding state is described by $\ket{p_j}$, where $p_j$ is the $j$th prime number. Configurations with only one particle are described by the subspace $\mathbb{H}_{\text{SP}}=\spn\{\ket{p}: p \text{ prime}\}$. Configurations with $k$ particles are described using the symmetric tensor product by $\mathbb{H}_{\text{SP}}^{\odot k}$. This space is spanned by vectors $\ket{n}$ for which $n$ has $k$ prime factors. The multiplicity of prime $p_j$ in the prime decomposition of $n$ is equal to the number of bosons at the $j$th lattice site. Putting this all together, we have that
\begin{equation}\label{Fock}
\mathbb{H}^\odot = \bigoplus_{k=0}^\infty \mathbb{H}_{\text{SP}}^{\odot k},
\end{equation} 
where $\mathbb{H}_{\text{SP}}^{\odot 0}=\mathbb{C}$. Note that $\mathbb{H}^\odot$ is isomorphic to $\ell_2(\mathbb{N})$.

The lowering operator $\hat{a}_p$ acts on the state $\ket{n}$ via
\begin{equation}
\hat{a}_p\ket{n} = \sqrt{a_p(n)}\ket{\frac{n}{p}}
\end{equation}
where $a_p(n)$ is the multiplicity of $p$ in the prime decomposition of $n$. This operator represents removing a boson from site $p$. Its adjoint, the creation operator acts as follows
\begin{equation}
\hat{a}_p^\dagger\ket{n} = \sqrt{a_p(n)+1}\ket{np},
\end{equation}
representing the addition of a boson to site $p$.

These operators satisfy the Bosonic Canonical Commutation Relations
\begin{equation}\label{BCCM}
[\hat{a}_p, \hat{a}_q^\dagger] = \delta_{p,q}, \quad [\hat{a}_p, \hat{a}_q]=0
\end{equation}
for all primes $p$ and $q$.

Recall that the number operators are defined via $\hat{N}_p = \hat{a}_p^\dagger \hat{a}_p$, and the total number operator is given by $\hat{N} = \sum_p \hat{N}_p$. The Bose-Hubbard Hamiltonian is
\begin{equation}
\mathcal{H} = \sum_{n=1}^\infty\Bigg(\frac{U}{2}\hat{N}_{p_n}(\hat{N}_{p_n}-1) - \mu \hat{N}_{p_n} - t(\hat{a}_{p_n}^\dagger\hat{a}_{p_{n+1}} + \hat{a}_{p_{n+1}}^\dagger\hat{a}_{p_n})\Bigg).
\end{equation}
The first term represents the on-site interaction, whose strength is controlled by $U$, and the second term describes the chemical potential $\mu$. The third term dictates hopping, or quantum tunnelling, between adjacent sites. The probability that a boson tunnels from one site to a neighbouring site is related to the hopping amplitude $t$.

\section{The Bose-Hubbard Dimer}
Consider the number theoretic Bose-Hubbard model with only two lattice sites, which are labelled by the prime numbers 2 and 3. Dropping the hopping amplitude factor, the hopping term of the Hamiltonian (which describes the energy asssociated  with quantum tunnelling) is given by
 \begin{equation}\label{hopping hamiltonian}
    \mathcal{H} = \hat{a}_2^\dagger\hat{a}_3 + \hat{a}_3^\dagger\hat{a}_2. 
 \end{equation}
In the case where the hopping term dominates over the other terms in the Hamiltonian, the system is in the superfluid phase. We restrict our attention to the superfluid phase with two lattice sites.

Each $k$-particle subspace is invariant under the action of this Hamiltonian, i.e. 
\begin{equation}\label{k particle subspace}
    \mathbb{H}^{\odot k} = \spn\{\ket{2^\alpha3^{k-\alpha}}: \alpha = 0,1,2,\dots,k\}.
\end{equation}
This is easy to see because
\begin{equation}
\hat{a}_2^\dagger\hat{a}_3\ket{2^\alpha3^{k-\alpha}} = \sqrt{(\alpha+1)(k-\alpha)} \ket{2^{\alpha+1}3^{k-\alpha-1}}
\end{equation}
and
\begin{equation}
 \hat{a}_3^\dagger\hat{a}_2\ket{2^\alpha3^{k-\alpha}}= \sqrt{\alpha(k-\alpha+1)}\ket{2^{\alpha-1}3^{k-\alpha+1}}. 
\end{equation}
In the $k$-particle subspace, the Hamiltonian is represented in the distinguished basis by the $(k+1)\times(k+1)$ tridiagonal matrix
\begin{equation}\label{matrix form}
    \mathcal{H}\big\lvert_{\mathbb{H}^{\odot k}} = \begin{pmatrix} 
    0 & \sqrt{k} \\
    \sqrt{k} & 0 & \sqrt{2(k-1)}\\
    & \sqrt{2(k-1)} & 0 & \sqrt{3(k-2)}\\
    & & \sqrt{3(k-2)} & 0 &\\
    & & & & \ddots\\
    & & & & & \sqrt{k}\\
    & & & & \sqrt{k} & 0 
    \end{pmatrix}
\end{equation}
for $k=0,1,2,3, \dots$\space. Equivalently, we can write
\begin{equation}
    \mathcal{H}\big\lvert_{\mathbb{H}^{\odot k}} = \sum_{\alpha=0}^k\Big(\sqrt{(k-\alpha)(\alpha+1)}\ket{2^{\alpha+1}3^{k-\alpha-1}}+ \sqrt{\alpha(k-\alpha+1)}\ket{2^{\alpha-1}3^{k-\alpha+1}}\Big)\bra{2^\alpha3^{k-\alpha}},
\end{equation}
This is the spin projection operator in the $x$-direction $S_x$, where we have spin quantum number $s=k/2$.

\section{Eigenvalues and eigenvectors}
The explicit formula for the eigenvalues and special matrices are typically derived using recurrence relations satisfying their characteristic polynomial. In the case of $S_x$, the eigenvalues are determined using its unitary equivalence with $S_z$. While the eigenvectors of $S_x$ are well-known, we provide a proof with an entirely new structure. This proof will enable firther investigations in the following sections. 

First, define
\begin{equation}\label{def_c_and_d_clean}
  \hat{c} =  \frac{\hat{a}_2+\hat{a}_3}{\sqrt{2}}\quad\mbox{ and }\quad
  \hat{d} =  \frac{\hat{a}_2-\hat{a}_3}{\sqrt{2}}.
\end{equation}
Using (\ref{BCCM}), we obtain
\begin{equation}\label{com_cd}
  [\hat{c},\hat{d}]= [\hat{c},\hat{d}^\dagger]= 0, \quad [\hat{c}, \hat{c}^\dagger] = [\hat{d}, \hat{d}^\dagger] = 1.
\end{equation}
Furthermore, it follows that
\begin{equation}\label{com_Hd}
    [\mathcal{H}, \hat{d}^\dagger] =-\hat{d}^\dagger, \quad   [\mathcal{H}, \hat{c}^\dagger] =\hat{c}^\dagger
\end{equation}
The following theorem characterizes the eigenvalues and eigenvectors. Note, however, that the eigenvectors given by (\ref{eigenvectors}) are \emph{not normalized} to have norm $1$.

\begin{theorem}\label{TheorMain}
The eigenvectors of $\mathcal{H}\lvert_{\mathbb{H}^{\odot k}}$ are 
\begin{equation}\label{eigenvectors}
(\hat{c}^\dagger)^m(\hat{d}^\dagger)^{k-m}\ket{2^03^0}, \quad\quad\quad m=0,1,2,\dots,k,
\end{equation}
with corresponding eigenvalues $-k+2m$.
\end{theorem}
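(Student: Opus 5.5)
The plan is to use the commutation relations (\ref{com_Hd}) together with the fact that the vacuum $\ket{2^03^0}=\ket{1}$ is annihilated by $\mathcal{H}$. First I check $\mathcal{H}\ket{1}=0$, which is immediate since both $\hat{a}_2$ and $\hat{a}_3$ kill the vacuum. Second, I verify (\ref{com_Hd}) directly. In the new variables we have $\hat{a}_2=(\hat{c}+\hat{d})/\sqrt{2}$ and $\hat{a}_3=(\hat{c}-\hat{d})/\sqrt{2}$, so
\begin{equation*}
\mathcal{H}=\hat{a}_2^\dagger\hat{a}_3+\hat{a}_3^\dagger\hat{a}_2=\hat{c}^\dagger\hat{c}-\hat{d}^\dagger\hat{d}.
\end{equation*}
Combined with (\ref{com_cd}), this gives $[\mathcal{H},\hat{c}^\dagger]=\hat{c}^\dagger$ and $[\mathcal{H},\hat{d}^\dagger]=-\hat{d}^\dagger$.

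Next I argue by induction on the total number of creation operators applied. Let $v$ be a vector with $\mathcal{H}v=\lambda v$. The commutator identities give
\begin{equation*}
\mathcal{H}\hat{c}^\dagger v=\hat{c}^\dagger\mathcal{H}v+\hat{c}^\dagger v=(\lambda+1)\hat{c}^\dagger v,
\qquad
\mathcal{H}\hat{d}^\dagger v=(\lambda-1)\hat{d}^\dagger v.
\end{equation*}
Starting from the vacuum with $\lambda=0$, I first apply $(\hat{d}^\dagger)^{k-m}$ and then $(\hat{c}^\dagger)^m$. This yields $\mathcal{H}$ applied to the vector (\ref{eigenvectors}) equal to $(m-(k-m))=(2m-k)$ times that vector. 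Each of these vectors lies in $\mathbb{H}^{\odot k}$, because $\hat{c}^\dagger$ and $\hat{d}^\dagger$ are linear combinations of $\hat{a}_2^\dagger$ and $\hat{a}_3^\dagger$. Each application therefore raises the particle number by exactly one.

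The step needing care is to show that these are genuinely eigenvectors, i.e. nonzero, and that they exhaust the spectrum of $\mathcal{H}\lvert_{\mathbb{H}^{\odot k}}$. For nonvanishing, I use (\ref{com_cd}): $\hat{c},\hat{d}$ form two independent canonical modes, so the standard computation gives
\begin{equation*}
\|(\hat{c}^\dagger)^m(\hat{d}^\dagger)^{k-m}\ket{1}\|^2=m!\,(k-m)!\neq 0.
\end{equation*}
This follows by repeatedly commuting $\hat{c}$ past $\hat{c}^\dagger$, or equivalently by noting $\hat{c}\ket{1}=\hat{d}\ket{1}=0$ and inducting. The $k+1$ eigenvalues $-k,-k+2,\dots,k$ are distinct, so the corresponding eigenvectors are linearly independent. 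Since $\dim\mathbb{H}^{\odot k}=k+1$, they form a basis, and hence they are all the eigenvectors up to scalar multiples. The main technical point is the nonvanishing norm computation; the rest is a direct algebraic consequence of the commutators.
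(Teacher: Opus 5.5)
Your proof is correct and follows essentially the same route as the paper: both start from $\mathcal{H}\ket{2^03^0}=0$ and apply the ladder relations $\mathcal{H}\hat{c}^\dagger=\hat{c}^\dagger(\mathcal{H}+1)$ and $\mathcal{H}\hat{d}^\dagger=\hat{d}^\dagger(\mathcal{H}-1)$ inductively. Your extra checks go beyond what the paper writes down: the norm $m!\,(k-m)!\neq 0$, and the argument that $k+1$ distinct eigenvalues in a $(k+1)$-dimensional space exhaust all eigenvectors. The paper's proof leaves both nonvanishing and completeness implicit.
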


\begin{proof}
We consider the action of $\mathcal{H}$ itself, as in Equation (\ref{hopping hamiltonian}), rather than any of its restrictions to invariant subspaces. An inductive proof will be employed.

The eigenvalue equation for the base case $k=0$ is trivially true,
\begin{equation}
    \mathcal{H}\ket{2^03^0} = \hat{a}_2^\dagger\hat{a}_3\ket{2^03^0} + \hat{a}_3^\dagger\hat{a}_2\ket{2^03^0} =0 = 0\ket{2^03^0}. 
 \end{equation}
For the inductive step, assume for arbitrary $k$,
\begin{equation}
    \mathcal{H}(\hat{c}^\dagger)^m(\hat{d}^\dagger)^{k-m}\ket{2^03^0} = (-k + 2m)(\hat{c}^\dagger)^m(\hat{d}^\dagger)^{k-m}\ket{2^03^0}.
\end{equation}
To compute $\mathcal{H}(\hat{c}^\dagger)^m(\hat{d}^\dagger)^{k+1-m}\ket{2^03^0}$, first note that $[\hat{c}^\dagger, \hat{d}^\dagger] = 0$ because $[\hat{a}_p^\dagger,\hat{a}_q^\dagger]=0$ for any prime numbers $p, q$. Therefore, 
\begin{equation}
    \mathcal{H}(\hat{c}^\dagger)^m(\hat{d}^\dagger)^{k+1-m}\ket{2^03^0} = \mathcal{H}\hat{d}^\dagger(\hat{c}^\dagger)^m(\hat{d}^\dagger)^{k-m}\ket{2^03^0}.
\end{equation}
Furthermore, (\ref{com_Hd}) implies $\mathcal{H}\hat{d}^\dagger = \hat{d}^\dagger(\mathcal{H}-1)$. This yields
\begin{equation}
\begin{split}
    \mathcal{H}(\hat{c}^\dagger)^m(\hat{d}^\dagger)^{k+1-m}\ket{2^03^0} &= \hat{d}^\dagger(\mathcal{H}-1)(\hat{c}^\dagger)^m(\hat{d}^\dagger)^{k-m}\ket{2^03^0}\\
    &=\hat{d}^\dagger(-k+2m)(\hat{c}^\dagger)^m(\hat{d}^\dagger)^{k-m}\ket{2^03^0} - (\hat{c}^\dagger)^m(\hat{d}^\dagger)^{k+1-m}\ket{2^03^0}\\
    &= \big(-(k+1) + 2m\big)(\hat{c}^\dagger)^m(\hat{d}^\dagger)^{k+1-m}\ket{2^03^0}.
\end{split}
\end{equation}
It is also necessary to consider the case $m=k+1$, as seen in Figure \ref{dimer_eigen_table}. That is
\begin{equation}
    \mathcal{H}(\hat{c}^\dagger)^{k+1}(\hat{d}^\dagger)^{k+1-(k+1)}\ket{2^03^0} = \mathcal{H}(\hat{c}^\dagger)^{k+1}\ket{2^03^0}
\end{equation}
Also, using (\ref{com_Hd}) again, we obtain $\mathcal{H}\hat{c}^\dagger = \hat{c}^\dagger(\mathcal{H}+1)$.
Thus, by the inductive assumption,
\begin{equation}
\begin{split}
    \mathcal{H}(\hat{c}^\dagger)^{k+1}\ket{2^03^0} &= \mathcal{H}\hat{c}^\dagger(\hat{c}^\dagger)^k\ket{2^03^0}\\
    &=\hat{c}^\dagger\mathcal{H}(\hat{c}^\dagger)^k\ket{2^03^0} + (\hat{c}^\dagger)^{k+1}\ket{2^03^0}\\
    &=k(\hat{c}^\dagger)^{k+1}\ket{2^03^0} + (\hat{c}^\dagger)^{k+1}\ket{2^03^0}\\
    &= (k+1)(\hat{c}^\dagger)^{k+1}\ket{2^03^0}.
\end{split}
\end{equation}
This completes the proof. 
\end{proof}
\vspace{.5cm}

\noindent
\emph{Remark 1.} Note that using (\ref{def_c_and_d_clean}), the Hamiltonian may be represented in the form:
\begin{equation}\label{theH_in_cd}
  \mathcal{H} =  \hat{c}^\dagger \hat{c} - \hat{d}^\dagger \hat{d}.
\end{equation}
\vspace{.5cm}

\noindent
\emph{Remark 2.}
The essence of the above proof and the iterative relationship between eigensystems is captured by Fig. \ref{cascade}.

\begin{figure}[ht]\label{cascade}
    \centering
    \begin{tikzpicture}[x=0.75pt,y=0.75pt,yscale=-1,xscale=1]

\draw    (180,90) -- (179.88,116.07) ;
\draw [shift={(179.87,118.07)}, rotate = 270.27] [color={rgb, 255:red, 0; green, 0; blue, 0 }  ][line width=0.75]    (10.93,-3.29) .. controls (6.95,-1.4) and (3.31,-0.3) .. (0,0) .. controls (3.31,0.3) and (6.95,1.4) .. (10.93,3.29)   ;
\draw    (180,160) -- (179.9,181.09) -- (179.88,186.07) ;
\draw [shift={(179.87,188.07)}, rotate = 270.27] [color={rgb, 255:red, 0; green, 0; blue, 0 }  ][line width=0.75]    (10.93,-3.29) .. controls (6.95,-1.4) and (3.31,-0.3) .. (0,0) .. controls (3.31,0.3) and (6.95,1.4) .. (10.93,3.29)   ;
\draw    (240,90) -- (280.18,111.07) ;
\draw [shift={(281.95,112)}, rotate = 207.67] [color={rgb, 255:red, 0; green, 0; blue, 0 }  ][line width=0.75]    (10.93,-3.29) .. controls (6.95,-1.4) and (3.31,-0.3) .. (0,0) .. controls (3.31,0.3) and (6.95,1.4) .. (10.93,3.29)   ;
\draw    (290,160) -- (289.88,186.07) ;
\draw [shift={(289.87,188.07)}, rotate = 270.27] [color={rgb, 255:red, 0; green, 0; blue, 0 }  ][line width=0.75]    (10.93,-3.29) .. controls (6.95,-1.4) and (3.31,-0.3) .. (0,0) .. controls (3.31,0.3) and (6.95,1.4) .. (10.93,3.29)   ;
\draw    (358.05,158) -- (398.23,179.07) ;
\draw [shift={(400,180)}, rotate = 207.67] [color={rgb, 255:red, 0; green, 0; blue, 0 }  ][line width=0.75]    (10.93,-3.29) .. controls (6.95,-1.4) and (3.31,-0.3) .. (0,0) .. controls (3.31,0.3) and (6.95,1.4) .. (10.93,3.29)   ;
\draw    (180,230) -- (179.88,256.07) ;
\draw [shift={(179.87,258.07)}, rotate = 270.27] [color={rgb, 255:red, 0; green, 0; blue, 0 }  ][line width=0.75]    (10.93,-3.29) .. controls (6.95,-1.4) and (3.31,-0.3) .. (0,0) .. controls (3.31,0.3) and (6.95,1.4) .. (10.93,3.29)   ;
\draw    (290.13,231.93) -- (290.01,258) ;
\draw [shift={(290,260)}, rotate = 270.27] [color={rgb, 255:red, 0; green, 0; blue, 0 }  ][line width=0.75]    (10.93,-3.29) .. controls (6.95,-1.4) and (3.31,-0.3) .. (0,0) .. controls (3.31,0.3) and (6.95,1.4) .. (10.93,3.29)   ;
\draw    (410,230) -- (409.88,256.07) ;
\draw [shift={(409.87,258.07)}, rotate = 270.27] [color={rgb, 255:red, 0; green, 0; blue, 0 }  ][line width=0.75]    (10.93,-3.29) .. controls (6.95,-1.4) and (3.31,-0.3) .. (0,0) .. controls (3.31,0.3) and (6.95,1.4) .. (10.93,3.29)   ;
\draw    (498.05,228) -- (538.23,249.07) ;
\draw [shift={(540,250)}, rotate = 207.67] [color={rgb, 255:red, 0; green, 0; blue, 0 }  ][line width=0.75]    (10.93,-3.29) .. controls (6.95,-1.4) and (3.31,-0.3) .. (0,0) .. controls (3.31,0.3) and (6.95,1.4) .. (10.93,3.29)   ;
\draw    (160,30) -- (160,340) ;
\draw    (50,20) -- (50,340) ;
\draw    (10,50) -- (656.87,49.2) ;

\draw (171,62.4) node [anchor=north west][inner sep=0.75pt]    {$\ket{2^{0} 3^{0}}$};
\draw (173,130.4) node [anchor=north west][inner sep=0.75pt]    {$d^{\dagger }\ket{2^{0} 3^{0}}$};
\draw (173,312.4) node [anchor=north west][inner sep=0.75pt]    {$\vdots $};
\draw (614,312.4) node [anchor=north west][inner sep=0.75pt]    {$\ddots $};
\draw (523,312.4) node [anchor=north west][inner sep=0.75pt]    {$\vdots $};
\draw (281,130.4) node [anchor=north west][inner sep=0.75pt]    {$c^{\dagger }\ket{2^{0} 3^{0}}$};
\draw (171,192.4) node [anchor=north west][inner sep=0.75pt]    {$\left( d^{\dagger }\right)^{2}\ket{2^{0} 3^{0}}$};
\draw (281,200.4) node [anchor=north west][inner sep=0.75pt]    {$d^{\dagger } c^{\dagger }\ket{2^{0} 3^{0}}$};
\draw (401,192.4) node [anchor=north west][inner sep=0.75pt]    {$\left( c^{\dagger }\right)^{2}\ket{2^{0} 3^{0}}$};
\draw (171,269.4) node [anchor=north west][inner sep=0.75pt]    {$\left( d^{\dagger }\right)^{3}\ket{2^{0} 3^{0}}$};
\draw (276,269.4) node [anchor=north west][inner sep=0.75pt]    {$\left( d^{\dagger }\right)^{2} c^{\dagger }\ket{2^{0} 3^{0}}$};
\draw (401,269.4) node [anchor=north west][inner sep=0.75pt]    {$d^{\dagger }\left( c^{\dagger }\right)^{2}\ket{2^{0} 3^{0}}$};
\draw (521,269.4) node [anchor=north west][inner sep=0.75pt]    {$\left( c^{\dagger }\right)^{3}\ket{2^{0} 3^{0}}$};
\draw (403,312.4) node [anchor=north west][inner sep=0.75pt]    {$\vdots $};
\draw (283,312.4) node [anchor=north west][inner sep=0.75pt]    {$\vdots $};
\draw (21,32.4) node [anchor=north west][inner sep=0.75pt]    {$k$};
\draw (65,31) node [anchor=north west][inner sep=0.75pt]   [align=left] {Eigenvalues};
\draw (170,32) node [anchor=north west][inner sep=0.75pt]   [align=left] {Eigenvectors};
\draw (21,62.4) node [anchor=north west][inner sep=0.75pt]    {$0$};
\draw (21,132.4) node [anchor=north west][inner sep=0.75pt]    {$1$};
\draw (21,202.4) node [anchor=north west][inner sep=0.75pt]    {$2$};
\draw (21,272.4) node [anchor=north west][inner sep=0.75pt]    {$3$};
\draw (101,62.4) node [anchor=north west][inner sep=0.75pt]    {$0$};
\draw (81,132.4) node [anchor=north west][inner sep=0.75pt]    {$-1,\ 1$};
\draw (71,202.4) node [anchor=north west][inner sep=0.75pt]    {$-2,\ 0,\ 2$};
\draw (57,272.4) node [anchor=north west][inner sep=0.75pt]    {$-3,\ -1,\ 1,\ 3$};
\draw (93,312.4) node [anchor=north west][inner sep=0.75pt]    {$\vdots $};
\draw (21,312.4) node [anchor=north west][inner sep=0.75pt]    {$\vdots $};

\end{tikzpicture}

    \caption{Eigenvalues and eigenvectors of $\mathcal{H}\lvert_{\mathbb{H}^{\odot k}}$ for $k=0,1,2,3,\dots$\space.}
    \label{dimer_eigen_table}
\end{figure}

\section{Normalization and the eigenbasis}
 Define $\hat{c}_k,d_k: \mathbb{H}^{\odot k} \rightarrow \mathbb{H}^{\odot k-1}$ as follows
\begin{equation}
    \hat{c}_k = \frac{1}{\sqrt{2k}}(\hat{a}_2 + \hat{a}_3), \quad \hat{d}_k = \frac{1}{\sqrt{2k}}(\hat{a}_2 - \hat{a}_3).
\end{equation}
Consider an arbitrary state
\begin{equation}
\ket{x} = \sum_{\alpha=0}^k x_{2^{k-\alpha}3^\alpha}\ket{2^{k-\alpha} 3^\alpha} \in \mathbb{H}^{\odot k}.
\end{equation}
A direct calculation yields:
\begin{eqnarray}
   \sqrt{2k}\, \hat{c}_k \ket{x} &=&  \sum_{\alpha=0}^{k-1}\Big(x_{2^{k-\alpha}3^\alpha}\sqrt{k-\alpha} + x_{2^{k-\alpha-1}3^{\alpha+1}}\sqrt{\alpha+1}\Big)\ket{2^{k-\alpha-1}3^\alpha} \\
   &&  \\
   \sqrt{2k}\, \hat{d}_k \ket{x} &=& \sum_{\alpha=0}^{k-1}\Big(x_{2^{k-\alpha}3^\alpha}\sqrt{k-\alpha} - x_{2^{k-\alpha-1}3^{\alpha+1}}\sqrt{\alpha+1}\Big)\ket{2^{k-\alpha-1}3^\alpha}
\end{eqnarray} 
Therefore 
\begin{align}
k\Big(\|\hat{c}_k\ket{x}\|^2+\|\hat{d}_k\ket{x}\|^2\Big) &= \sum_{\alpha=0}^{k-1}\Big(k-\alpha)|x_{2^{k-\alpha}3^\alpha}|^2 +(\alpha+1)|x_{2^{k-\alpha-1}3^{\alpha+1}}|^2\Big) \\
&= \sum_{\alpha=0}^k\Big(\alpha |x_{2^{k-\alpha}3^\alpha}|^2 + (k-\alpha)|x_{2^{k-\alpha}3^\alpha}|^2\Big) \\
&=k\sum_{\alpha=0}^k |x_{2^{k-\alpha}3^\alpha}|^2.
\end{align}
That is
\begin{align}
\|\hat{c}_k\ket{x}\|^2+\|\hat{d}_k\ket{x}\|^2 = \|\ket{x}\|^2,
\end{align}
giving a normalized sum.

The matrices for $\hat{c}_k,\hat{d}_k$ have $k$ rows and $k+1$ columns, and take the form
\begin{equation}
 \hat{c}_k = \frac{1}{\sqrt{2k}} \, \begin{pmatrix} 
    \sqrt{k} & \sqrt{1} \\
    & \sqrt{k-1} & \sqrt{2} \\
    & & \sqrt{k-2}& \ddots \\
    &&&\ddots &\sqrt{k-1}\\
    &&&&\sqrt{1}&\sqrt{k}
    \end{pmatrix}
\end{equation}
\begin{equation}
    \hat{d}_k = \frac{1}{\sqrt{2k}} \, \begin{pmatrix} 
    \sqrt{k} & - \sqrt{1} \\
    & \sqrt{k-1} & -\sqrt{2} \\
    & & \sqrt{k-2}& \ddots \\
    &&&\ddots &-\sqrt{k-1}\\
    &&&&\sqrt{1}&-\sqrt{k}
    \end{pmatrix}
\end{equation}
In particular, the $(2k)\times(k+1)$ block matrix created by stacking $\hat{c}_k$ on top of $d_k$ has columns that are mutually orthogonal. Therefore, this block matrix is a unitary immersion.

Computation of the coefficients of a vector $\ket{x}\in \mathbb{H}^{\odot k}$ can be performed recursively, as indicated in Fig.\ref{Fig2}.  Although this is a hierarchical computation, it does not constitute a speedup over the standard calculation of those coefficients (by taking the inner product of $\ket{x}$ with the columns of that unitary immersion). Indeed, it is easily seen that both schemes require $\mathcal{O}(k^3)$ arithmetical operations.

\begin{figure}[ht]
\centering
\begin{tikzpicture}[x=0.75pt,y=0.75pt,yscale=-1,xscale=2]

\draw    (20,40) -- (19.88,66.07) ;
\draw [shift={(19.87,68.07)}, rotate = 270.27] [color={rgb, 255:red, 0; green, 0; blue, 0 }  ][line width=0.75]    (10.93,-3.29) .. controls (6.95,-1.4) and (3.31,-0.3) .. (0,0) .. controls (3.31,0.3) and (6.95,1.4) .. (10.93,3.29)   ;
\draw    (20,110) -- (19.88,136.07) ;
\draw [shift={(19.87,138.07)}, rotate = 270.27] [color={rgb, 255:red, 0; green, 0; blue, 0 }  ][line width=0.75]    (10.93,-3.29) .. controls (6.95,-1.4) and (3.31,-0.3) .. (0,0) .. controls (3.31,0.3) and (6.95,1.4) .. (10.93,3.29)   ;
\draw    (50,40) -- (73.59,63.59) ;
\draw [shift={(75,65)}, rotate = 225] [color={rgb, 255:red, 0; green, 0; blue, 0 }  ][line width=0.75]    (10.93,-3.29) .. controls (6.95,-1.4) and (3.31,-0.3) .. (0,0) .. controls (3.31,0.3) and (6.95,1.4) .. (10.93,3.29)   ;
\draw    (80,110) -- (79.88,136.07) ;
\draw [shift={(79.87,138.07)}, rotate = 270.27] [color={rgb, 255:red, 0; green, 0; blue, 0 }  ][line width=0.75]    (10.93,-3.29) .. controls (6.95,-1.4) and (3.31,-0.3) .. (0,0) .. controls (3.31,0.3) and (6.95,1.4) .. (10.93,3.29)   ;
\draw    (120,110) -- (143.59,133.59) ;
\draw [shift={(145,135)}, rotate = 225] [color={rgb, 255:red, 0; green, 0; blue, 0 }  ][line width=0.75]    (10.93,-3.29) .. controls (6.95,-1.4) and (3.31,-0.3) .. (0,0) .. controls (3.31,0.3) and (6.95,1.4) .. (10.93,3.29)   ;

\draw (13,12.4) node [anchor=north west][inner sep=0.75pt]    {$\ket{x}$};
\draw (11,80.4) node [anchor=north west][inner sep=0.75pt]    {$\hat{d}_1\ket{x}$};
\draw (71,80.4) node [anchor=north west][inner sep=0.75pt]    {$\hat{c}_1\ket{x}$};
\draw (11,146.4) node [anchor=north west][inner sep=0.75pt]    {$\hat{d}_2\hat{d}_1\ket{x}$};
\draw (73,150.4) node [anchor=north west][inner sep=0.75pt]    {$ \hat{d}_2\hat{c}_1 \ket{x}$};
\draw (151,146.4) node [anchor=north west][inner sep=0.75pt]    {$\hat{c}_2\hat{c}_1\ket{x}$};
\draw (11,196.4) node [anchor=north west][inner sep=0.75pt]    {$\hat{d}_k \ldots\hat{d}_1\ket{x}$};
\draw (71,196.4) node [anchor=north west][inner sep=0.75pt]    {$\hat{d}_k \ldots\hat{d}_2 \hat{c}_1\ket{x}$};
\draw (201,196.4) node [anchor=north west][inner sep=0.75pt]    {$\hat{c}_k \ldots\hat{c}_1\ket{x}$};
\draw (161,192.4) node [anchor=north west][inner sep=0.75pt]    {$\dotsc $};
\draw (13,172.4) node [anchor=north west][inner sep=0.75pt]    {$\vdots $};
\draw (194,172.4) node [anchor=north west][inner sep=0.75pt]    {$\ddots $};
\draw (81,172.4) node [anchor=north west][inner sep=0.75pt]    {$\vdots $};

\end{tikzpicture}
\caption{A visualization of how a vector $\ket{x}$ can be represented in the orthonormal eigenbasis of $\mathcal{H}$, i.e., how the coefficients can be calculated by subsequent applications of $\hat{c}_k$ and $\hat{c}_k$. } 
\label{Fig2}
\end{figure}

\section{The Fock space and coherent states}

Theorem \ref{TheorMain} enables us to discuss the properties of Hamiltonian (\ref{hopping hamiltonian}) as an operator in the Fock space (\ref{Fock}). We have observed that the dimer admits two sets of creation and annihilation operators: the first, stemming from $\hat{a}_2$ and  $\hat{a}_3$, and the second, stemming from  $\hat{c}$ and  $\hat{d}$. Indeed, both sets satisfy the BCCM as stated in (\ref{BCCM}) and in (\ref{def_c_and_d_clean}). This leads to two definitions of coherent states (CS). The first is given by 
\begin{equation}\label{CS0}
  \ket{w,z}_0 = e^{-\frac{|w|^2 +|z|^2}{2}} \exp(w\hat{a}_2^\dagger)
 \, \exp(z\hat{a}_3^\dagger)\, \ket{1},
\end{equation}
while the second is given by
\begin{equation}\label{CS1}
  \ket{w,z} = e^{-\frac{|w|^2 +|z|^2}{2}} \exp(w\hat{c}^\dagger)
 \, \exp(z\hat{d}^\dagger)\, \ket{1} =
  e^{-\frac{|w|^2 +|z|^2}{2}} \sum_{k =0}^{\infty} \sum_{m=0}^{k}\,\frac{w^mz^{k-m}}{m!(k-m)!} \, (\hat{c}^\dagger)^m
  (\hat{d}^\dagger)^{k-m} \, \ket{1}.
\end{equation}
However, these definitions are equivalent via a change of variables. Indeed, since all creation operators commute, we observe:
\begin{eqnarray*}
  \exp(w\hat{c}^\dagger)\, \exp(z\hat{d}^\dagger) &=& \exp\left(w\frac{\hat{a}_2^\dagger + \hat{a}_3^\dagger}{\sqrt{2}} + z \frac{\hat{a}_2^\dagger - \hat{a}_3^\dagger}{\sqrt{2}}\right) \\
   &=& \exp\left(\frac{w+z}{\sqrt{2}}\hat{a}_2^\dagger + \frac{w-z}{\sqrt{2}}\hat{a}_3^\dagger\right) = \exp\left(\frac{w+z}{\sqrt{2}}\hat{a}_2^\dagger\right) \exp\left(\frac{w-z}{\sqrt{2}}\hat{a}_3^\dagger\right).
\end{eqnarray*}
In summary, 
\begin{equation}\label{CSequiv}
  \ket{w,z} = \ket{\frac{w+z}{\sqrt{2}}, \frac{w-z}{\sqrt{2}}}_0.
\end{equation}
In what follows we will utilize definition (\ref{CS1}), remembering that all findings could be restated in terms of CS defined in (\ref{CS0}). As is well known, the CS are the eigenvectors of the annihilation operators, namely:
\begin{equation}\label{eig1}
\hat{c} \, \ket{w,z} = w \ket{w,z}, \quad \hat{d} \, \ket{w,z} = z \ket{w,z}. 
\end{equation}
Note that (\ref{theH_in_cd}) indicates
\begin{equation}\label{exp1}
\bra{w,z}\mathcal{H} \ket{w,z} = |w|^2 - |z|^2.
\end{equation}
Theorem \ref{TheorMain} indicates that  $\ket{w,z}$ is a superposition of the eigenvectors of $\mathcal{H}$. In fact, we have:
\begin{equation}\label{CS1grH}
 \mathcal{H}\, \ket{w,z} = e^{-\frac{|w|^2 +|z|^2}{2}} \sum_{k =0}^{\infty} \sum_{m=0}^{k}\,\frac{w^mz^{k-m}}{m!(k-m)!} \, (-k+2m) \,(\hat{c}^\dagger)^m
  (\hat{d}^\dagger)^{k-m} \, \ket{1}.
\end{equation}

Note that the set of eigenvalues of $\mathcal{H}$ is $\mathbb{Z}$, i.e. the set of all integers.  It is interesting to ask for the probability that the energy measurement will return the value $\alpha \in \mathbb{Z}$, i.e. $\mathcal{P}(E = \alpha)$, when the state is $\ket{w,z}$. A calculation shows that we have 
\begin{eqnarray}
   \mathcal{P}(E = \alpha\geq 0)  &=& e^{-|w|^2-|z|^2}\, |w|^{2\alpha} \sum_{n=0}^{\infty} \frac{|wz|^{2n}}{n!(n+\alpha)!}  \\
 \nonumber&& \\
   \mathcal{P}(E = \alpha<0)  &=& e^{-|w|^2-|z|^2}\, |z|^{2\alpha} \sum_{n=0}^{\infty} \frac{|wz|^{2n}}{n!(n-\alpha)!} 
\end{eqnarray}

\section{Two-site cat states induced by the dynamics furnished by $\mathcal{H}^2$}\label{Section_Cat}

The phenomenon of creation of cat states via nonharmonic evolution goes back to the seminal papers \cite{Milburn} and \cite{Yurke-Stoler}. Therein, a CS is evolved to a cat state via the dynamics induced by the square of the number operator. It is a straightforward deduction from those classical results that the evolution via $\hat{N}^2 = (\hat{a}_2^\dagger \hat{a}_2 + \hat{a}_3^\dagger \hat{a}_3)^2$ would periodically effect a cat state. Here, we demonstrate that cat states also result from the dynamics induced by $\mathcal{H}^2$, where $\mathcal{H}$ is given by (\ref{hopping hamiltonian}). The hypothesis is natural in light of the facts we have already established: first, that $\mathcal{H}$ is somewhat akin to the number operator, (\ref{theH_in_cd}) and, second, that CS may be expressed as a superposition of its eigenstates, (\ref{CS1}). However, this would not be obvious \emph{a priori} since in the standard form
\[
\mathcal{H}^2 = (\hat{a}_2^\dagger)^2\hat{a}_3^2 + (\hat{a}_3^\dagger)^2\hat{a}_2^2 - 2\hat{a}_2^\dagger\hat{a}_3^\dagger \hat{a}_2\hat{a}_3 + \hat{N}
\]
appears to entail more hopping processes than
\[
\hat{N}^2 =  - (\hat{a}_2^\dagger)^2\hat{a}_2^2 - (\hat{a}_3^\dagger)^2\hat{a}_3^2+ 2\hat{a}_2^\dagger\hat{a}_3^\dagger \hat{a}_2\hat{a}_3  +  \hat{N}. 
\] 
To establish creation of a cat state, we begin by observing that (\ref{CS1grH}) implies:
\begin{equation}\label{CS1grexpH}
 \ket{w,z,t} := e^{i\mathcal{H}^2 t}\, \ket{w,z} = e^{-\frac{|w|^2 +|z|^2}{2}} \sum_{k =0}^{\infty} \sum_{m=0}^{k}\,\frac{w^mz^{k-m}}{m!(k-m)!} \, e^{i(-k+2m)^2t} \,(\hat{c}^\dagger)^m
  (\hat{d}^\dagger)^{k-m} \, \ket{1}.
\end{equation}
Note that the evolution is periodic with period $2\pi$. Also, since $e^{i(-k+2m)^2\pi} = e^{ik\pi} =(-1)^k$, and taking into account the fact that $w^mz^{k-m} (-1)^k = (-w)^m(-z)^{k-m}$, we see that the evolution effects a change of sign in intervals that are an odd multiple of $\pi$. Thus,
\[
\ket{w,z, t+2\pi} =  \ket{w,z, t},\quad \ket{w,z,t+ \pi} =  \ket{-w,-z,t}.
\]
Next, we invoke the basic fact:
\begin{equation}\label{expH}
   e^{i(-k+2m)^2\pi/2} = e^{i k^2\pi/2}  =\frac{1}{\sqrt{2}}\left( e^{i\pi/4} +(-1)^k e^{-i\pi/4}\right).
\end{equation}
Substituting this into (\ref{CS1grexpH}),  we obtain:
\begin{equation}\label{alpha_doubling}
 \ket{w,z,t+ \pi/2} = \frac{e^{i\pi/4}}{\sqrt{2}} \, \ket{w,z,t} + \frac{e^{-i\pi/4}}{\sqrt{2}} \, \ket{-w,-z,t}. 
\end{equation}
Thus a CS periodically turns into a cat state. 
\vspace{.5cm}

\noindent \textsc{Remark.} An analysis of cat states on an infinite array of boson sites has been given in \cite{Fransson_Sanders_Sowa}. Cat state formation via tunneling has also been investigated in \cite{Wielinga_Milburn} in the context of Kerr-type media.

\section{Conclusion}
We have introduced an explicit method for the construction of the eigenstates of the boson dimer hopping Hamiltonian. In particular, it furnishes a new way to diagonalize the spin $x$-projection operator. The construction enabled us to gain insight into the dynamics of two-site coherent states. We have established that the dynamics induced by the square of the hopping Hamiltonian leads to formation of two-site cat states.

\begin{acknowledgments}
The authors thank Barry Sanders for helpful comments and for pointing out a reference relevant to this work.
\end{acknowledgments}

\newpage
\nocite{*}
\bibliographystyle{unsrturl}
\bibliography{references}

@article{dimer01,
  author = {J. Sicks and H. Rieger},
  title = {{The double-well Bose Hubbard model with nearest-neighbour and cavity-mediated long-range interactions}},
  note = {arXiv:2308.15915. \url{https://doi.org/10.1103/physreva.109.033317}}
}

@article{Zhang_Rieger,
  author = {C. Zhang and H. Rieger},
  title = {{Phase diagrams of the disordered Bose-Hubbard model with cavity-mediated long-range and nearest-neighbor interactions}},
  note = {\textsc{Eur. Phys. J. B} \textbf{93} (2020), 25. \url{https://doi.org/10.1140/epjb/e2019-100420-1}}
}

@article{dimer02,
  author = {E.M. Graefe and U. Guenther and H.J. Korsch and A.E. Niederle},
  title = {{A non-Hermitian PT-symmetric Bose-Hubbard model: eigenvalue rings from unfolding higher-order exceptional points}},
  note = {arXiv:0802.3164. \url{https://doi.org/10.1088/1751-8113/41/25/255206}}
}

@article{exact_diag_bh,
  author = {J.M. Zhang and R.X. Dong},
  title = {{Exact diagonalization: the Bose-Hubbard model as an example}},
  note = {\textsc{Eur. J. Phys.} \textbf{31} (2010), 591. \url{https://doi.org/10.1088/0143-0807/31/3/016}}
}

@article{number_theoretic_bose_hubbard,
  author = {A. Sowa and J. Fransson},
  title = {{Solving the Bose-Hubbard model in new ways}},
  note = {\textsc{Quantum} \textbf{6} (2022), 728. \url{https://doi.org/10.22331/q-2022-06-02-728}}
}

@article{nonlocal_coherent_states,
  author = {A. Sowa and J. Fransson},
  title = {{Nonlocal coherent states in an infinite array of boson sites}},
  note = {\textsc{Physica A: Statistical Mechanics and its Applications} \textbf{669} (2025), 130606. \url{https://doi.org/10.1016/j.physa.2025.130606}}
}

@article{Milburn,
  author = {G.J. Milburn},
  title = {{Quantum and classical Liouville dynamics of the anharmonic oscillator}},
  note = {\textsc{Phys. Rev. A} \textbf{33} (1986), 674. \url{https://doi.org/10.1103/physreva.33.674}}
}

@article{Yurke-Stoler,
  author = {B. Yurke and D. Stoler},
  title = {{Generating quantum mechanical superpositions of macroscopically distinguishable states via amplitude dispersion}},
  note = {\textsc{Phys. Rev. Lett.} \textbf{57} (1986), 13. \url{https://doi.org/10.1103/physrevlett.57.13}}
}

@article{Chen,
  author = {Chen Wang et al.},
  title = {{A Schrödinger cat living in two boxes}},
  note = {\textsc{Science} \textbf{352} (2016), 1087. \url{https://doi.org/10.1126/science.aaf2941}}
}

@article{Fransson_Sanders_Sowa,
  author = {J. Fransson and B. Sanders and A. Sowa},
  title = {{Macroscopically distinguishable superposition in infinitely many degrees of freedom}},
  note = {\textsc{International Journal of Theoretical Physics} \textbf{65:16} (2026). \url{https://doi.org/10.1007/s10773-025-06214-z}}
}

@article{Wielinga_Milburn,
  author = {B. Wielinga and G.J. Milburn},
  title = {{Quantum tunneling in a Kerr medium with parametric pumping}},
  note = {\textsc{Phys. Rev. A} \textbf{48} (1993), 2494. \url{https://doi.org/10.1103/physreva.48.2494}}
}

\end{document}